\documentclass[a4paper,11pt]{article}
\usepackage[left=2.8cm,right=2.8cm,top=2.8cm,bottom=2.8cm]{geometry}
\usepackage{amsmath,amsthm,amsfonts,amssymb,url}
\usepackage[english]{babel}
\usepackage{graphicx}
\usepackage{subfig}

\newtheorem{theorem}{Theorem}

\title{A tree metric using structure and length to capture distinct phylogenetic signals}
\author{Michelle Kendall, Caroline Colijn}
\date{}

\begin{document}
\maketitle

\begin{abstract}
Phylogenetic trees are a central tool in understanding evolution. They are typically inferred from sequence data, and capture evolutionary relationships through time.  It is essential to be able to compare trees from different data sources (e.g.\ several genes from the same organisms) and different inference methods. 
We propose a new metric for robust, quantitative comparison of rooted, labeled trees. It enables clear visualizations of tree space, gives meaningful comparisons between trees, and can detect distinct islands of tree topologies in posterior distributions of trees. 
This makes it possible to select well-supported summary trees.
We demonstrate our approach on Dengue fever phylogenies.
\end{abstract}

\section{Introduction}

Phylogenetic trees are fundamental tools for understanding evolution.
Improvements in sequencing technology have meant that phylogenetic analyses are growing in size and scope. 
However, when a tree is inferred from data there are multiple sources of uncertainty. Competing approaches to tree estimation can produce markedly different trees.
Trees may conflict due to signals from selection (e.g. convergent evolution), and/or when derived from different data (e.g.\ the organisms' mitochondrial vs nuclear DNA, individual genes or other subsets of sequence data~\cite{Jiang2014}). 
Evolution is not always tree-like: species trees differ from gene trees, and many organisms exchange genes through horizontal gene transfer. 
It is therefore crucial to be able to compare trees to identify these signals.

Trees can be compared by direct visualization, aided by methods such as tanglegrams and software such as DensiTree~\cite{DensiTree2010}, but this does not lend itself to detailed comparison of large groups of trees. 
Current quantitative methods for tree comparison suffer from the challenges of visualizing non-Euclidean distances~\cite{Harris2008} and from counter-intuitive behavior.
For example, the nearest-neighbor interchange (NNI) distance  of Robinson and Foulds (RF)~\cite{Robinson1981}, which is the most widely used, is hampered by the fact that large NNI distances do not imply large changes among the shared ancestry of most tips~\cite{Steel1993,Lin2011,Kuhner2014}.
In fact, two trees differing in the placement of a single tip can be a maximal NNI distance apart. 

We introduce a \emph{metric} which flexibly captures both tree structure and branch lengths. It can be used as a quantitative tool for comparing phylogenetic trees.    
Each metric on trees defines a \emph{tree space}; this tree space lends itself to clear visualizations in low dimensions, and captures and highlights differences in trees according to their biological significance.

In Section~\ref{sec:metric} we formally define our distance function, prove that it is a metric, and explain its capacity to capture tree structure and branch lengths.
We also provide a brief survey, explaining how our metric relates to and differs from existing metrics (Section~\ref{sec:others}).
In Section~\ref{sec:exploring} we explain some of the applications of our metric. We show how our metric enables visualization of tree space (Section~\ref{sec:visualisation}) and detection of islands (Section~\ref{sec:islands}), which we demonstrate with a simple application to Dengue fever phylogenies. We also explain how our metric provides a new suite of methods for selecting summary trees in Section~\ref{sec:summary}.
We conclude with some ideas for extensions to our metric in Section~\ref{sec:conclusion}.

\section{Metrics} \label{sec:metric}

\subsection{Our metric: definition and proof} \label{sec:defn}
Let $\mathcal{T}_k$ be the set of all rooted trees on $k$ tips with labels ${1,\dots,k}$.
In common with previous literature~\cite{Harding1971,Robinson1981} we say that trees $T_a, T_b \in \mathcal{T}_k$ have the same labeled shape or \emph{topology} if the set of all tip partitions admitted by internal edges of $T_a$ is identical to that of $T_b$, and we write this as $T_a \cong T_b$. 
We say that $T_a=T_b$ if they have the same topology and each corresponding branch has the same length.

For any tree $T_a\in \mathcal{T}_k$ let $m_{i,j}$ be the number of edges on the path from the root to the most recent common ancestor (MRCA) of tips $i$ and $j$, 
let $M_{i,j}$ be the length of this path, and 
let $p_i$ be the length of the pendant edge to tip $i$. 
Then, including all pairs of tips, we have two vectors:
$$ m(T) = (m_{1,2}, m_{1,3},\dots,m_{k-1,k},\underbrace{1,\dots,1}_{k \text{ times}}) \enspace ,$$
which captures the tree topology, and 
$$ M(T) = (M_{1,2}, M_{1,3},\dots,M_{k-1,k},p_1,\dots,p_k) $$
which captures the topology and the branch lengths. 
The vector $M(T)$ is similar to the vector of cophenetic values~\cite{Sokal1962,Cardona2013} (Section~\ref{sec:others}).
We form a convex combination of these vectors, parameterized with $\lambda \in [0,1]$, to give
$$
v_{\lambda}(T) = (1-\lambda)m(T) + \lambda M(T) \enspace .
$$

\noindent Figure~\ref{fig:construction} provides an example of this calculation for two small trees.

\begin{figure}[htb]
	\centering	
    \includegraphics[width=0.95\textwidth]{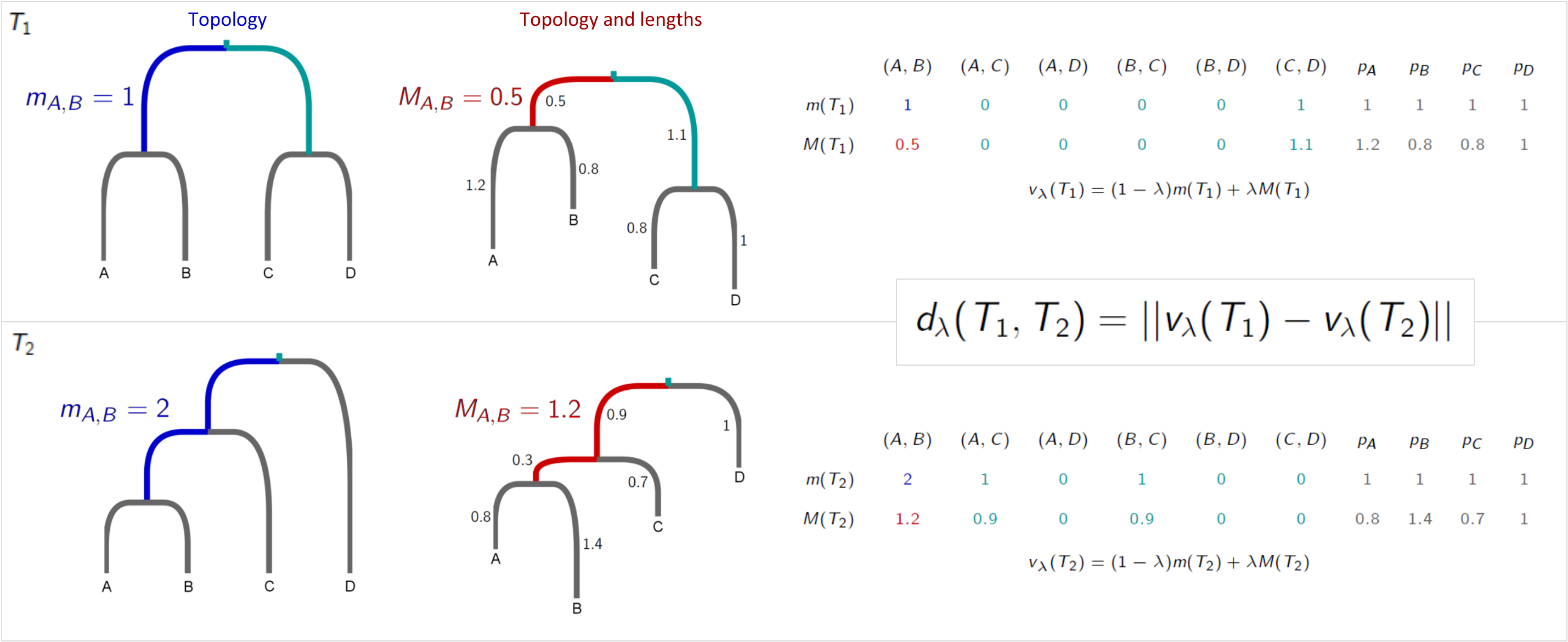}
\caption{A tree is characterized by the vectors $m$ and $M$, which are calculated as shown. These are used to calculate the distance between the trees for any $\lambda \in [0,1]$. Here, $d_0(T_1,T_2)=2$ and $d_1(T_1,T_2)=1.96$.}
\label{fig:construction}
\end{figure}

A metric is a mathematical notion of distance; specifying a metric gives structure and shape to a set of objects, forming a \emph{space}.
A function $d(T_1,T_2)$ is a metric if, for all $T_1, T_2 \in \mathcal{T}_k$, 
\begin{enumerate}
\item $d(T_1,T_2)\geq 0$ (distances are non-negative)
\item $d(T_1,T_2)=0 \Leftrightarrow T_1=T_2$ (the distance is only 0 if they are the same) 
\item $d(T_1,T_2)=d(T_2,T_1)$ (distance is  symmetric)
\item for any $T_3 \in \mathcal{T}_k$, $d(T_1,T_2) \leq d(T_1,T_3)+d(T_3,T_2)$ (the triangle inequality) 
\end{enumerate}

\begin{theorem}
The function $d_{\lambda}:\mathcal{T}_k \times \mathcal{T}_k \rightarrow \mathbb{R}$ given by 
$$ d_{\lambda}(T_a, T_b) = \| v_{\lambda}(T_a) - v_{\lambda}(T_b)\| $$
is a metric on $\mathcal{T}_k$, where $\| \cdot \|$ is the Euclidean distance ($l^2$-norm) and $\lambda \in [0,1]$.
\end{theorem}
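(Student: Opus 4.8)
The plan is to treat $d_\lambda$ as the Euclidean distance pulled back along the map $v_\lambda:\mathcal{T}_k\to\mathbb{R}^{\binom{k}{2}+k}$, and to reduce the entire theorem to one substantive claim: that $v_\lambda$ is injective. Since $\|\cdot\|$ is already a metric on $\mathbb{R}^{\binom{k}{2}+k}$, properties 1, 3 and 4 are inherited for free: writing $x=v_\lambda(T_a)$, $y=v_\lambda(T_b)$, $z=v_\lambda(T_c)$, the bounds $\|x-y\|\ge 0$, $\|x-y\|=\|y-x\|$, and $\|x-y\|\le\|x-z\|+\|z-y\|$ are exactly the corresponding properties of the $l^2$-norm. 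The reverse direction of property 2 is equally immediate: if $T_a=T_b$ then $m(T_a)=m(T_b)$ and $M(T_a)=M(T_b)$, so $v_\lambda(T_a)=v_\lambda(T_b)$ and the distance is $0$. Everything therefore reduces to the forward direction of property 2, that $v_\lambda(T_a)=v_\lambda(T_b)$ forces $T_a=T_b$, which is precisely injectivity of $v_\lambda$.

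To establish injectivity I would reconstruct the tree from $v_\lambda(T)$ by peeling it apart from the root downwards, handling $\lambda$ in the interior $(0,1)$ first. The engine is a single observation about the MRCA coordinates. Because branch lengths are non-negative, a pair $(i,j)$ whose MRCA is the root has $m_{i,j}=0$ and hence $M_{i,j}=0$, so its coordinate in $v_\lambda$ is exactly $0$; every other pair has $m_{i,j}\ge 1$ and so coordinate at least $(1-\lambda)>0$. Thus the vanishing MRCA coordinates pick out exactly the pairs separated at the root, recovering the partition of tips into the clades below the root's children. One then recurses: restricted to a clade $C$ with subtending edge of length $\ell_c$, each MRCA coordinate equals $[(1-\lambda)+\lambda\ell_c]+[(1-\lambda)m'_{i,j}+\lambda M'_{i,j}]$, where $m'_{i,j},M'_{i,j}$ are measured within the subtree. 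The second bracket is non-negative and vanishes precisely when the MRCA is the top node of $C$, so the minimum coordinate over pairs in $C$ equals $(1-\lambda)+\lambda\ell_c$; this pins down $\ell_c$ (as $\lambda>0$) and, after subtraction, returns the subtree's own $v_\lambda$ vector, to which the same argument applies. Reading off the pendant coordinates $(1-\lambda)+\lambda p_i$ recovers the pendant lengths, and the induction terminates at the tips having reconstructed the topology and every branch length.

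I expect the genuine obstacle to be exactly this disentangling of the structural vector $m$ from the length vector $M$ inside the single convex combination: a priori a change in topology (some $m_{i,j}$) could be masked by a compensating change in branch lengths (some $M_{i,j}$), and excluding this is the whole content of injectivity. The peeling argument defeats it because the integer ``quantum'' $(1-\lambda)$ carried by each unit of $m_{i,j}$ cannot be cancelled by the non-negative length contributions, so the structural information survives in the pattern of minimal coordinates. I would also lean on the cited fact that the (weighted) cophenetic vector determines a tree~\cite{Cardona2013} to streamline the within-topology step, namely that once the topology is fixed the internal node depths, and hence all branch lengths, are forced by $M$.

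Finally I would treat the endpoints with care, as this is where the argument is most delicate. At $\lambda=1$ the coefficient $(1-\lambda)$ vanishes, so the zero/minimum test detects the shallowest MRCA only when internal edges are strictly positive; thus $d_1$ is a metric provided internal branch lengths are positive (equivalently, on trees taken modulo contraction of zero-length edges). At $\lambda=0$ we have $v_0=m$, which records only the topology: here $d_0$ separates distinct topologies but is blind to branch lengths, so it is a genuine metric on labelled topologies and a pseudometric on branch-length-endowed trees. Recording these two boundary cases completes the range $\lambda\in[0,1]$.
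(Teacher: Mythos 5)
Your proposal is correct, and its crux takes a genuinely different route from the paper's. Both proofs make the same (forced) reduction---the Euclidean norm supplies non-negativity, symmetry and the triangle inequality, so everything rests on injectivity of $v_\lambda$---and both recover the root partition from the vanishing coordinates. But for $\lambda\in(0,1)$, where the real content lies, the paper argues by contradiction: assuming $T_a\neq T_b$ with $d_\lambda(T_a,T_b)=0$, it fixes a pair $x,y$ whose $m$-values differ by $n$, asserts the existence of $n$ tips $z_1,\dots,z_n$ lying above the MRCA of $x,y$ in one tree but at or below it in the other, and derives incompatible inequalities from the coordinate-wise cancellation condition. You instead reconstruct the tree outright from $v_\lambda(T)$ by top-down induction: zero coordinates give the root split, the minimum coordinate within each clade $C$ equals $(1-\lambda)+\lambda\ell_c$ and so pins down the subtending edge length, and subtraction hands you the subtree's own vector. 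Your route buys several things: it is constructive (a decoding algorithm rather than a contradiction); it needs no auxiliary combinatorial lemma---in particular it avoids the compensating-tips claim, which is the delicate step in the paper's part (3) and is asserted there with only brief justification; it handles topology and lengths in a single pass rather than via three separate parts; and it works verbatim for non-binary trees and for zero-length internal edges. What the paper's decomposition buys is that parts (1) and (2) isolate $m(T_a)=m(T_b)\Leftrightarrow T_a\cong T_b$ and $M(T_a)=M(T_b)\Leftrightarrow T_a=T_b$ as stand-alone facts (connecting directly to the cophenetic results of Cardona et al.), so the endpoints $\lambda=0,1$ of the theorem are read off immediately, whereas your peeling step uses both $\lambda>0$ and $\lambda<1$ and forces a separate endpoint discussion. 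On that score your treatment is in fact more careful than the paper's: you note that at $\lambda=1$ separation requires strictly positive internal branch lengths (otherwise distinct topologies with zero-length internal edges share the same $M$), and that at $\lambda=0$ one obtains a metric on labelled topologies but only a pseudometric on trees equipped with branch lengths---caveats the paper leaves implicit in its restatement that $d_0$ ``is a metric on tree topologies.''
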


\begin{proof}
Since the Euclidean distance between vectors satisfies the conditions (1), (3) and (4) for being a metric, it remains to prove that $d_0(T_a,T_b)=0 \Leftrightarrow T_a\cong T_b$ 
(i.e.\ the distance is 0 with $\lambda=0$ if and only if the trees have the same topology) 
and $d_\lambda(T_a,T_b)=0 \Leftrightarrow T_a=T_b$ for all $\lambda \in (0,1]$ 
(i.e.\ the distance is 0 for $0 < \lambda \leqslant 1$ if and only if the trees are 
identical).
We will address this in three stages, showing that (1) the tree topology vector, (2) the branch-length focused vector, and (3) their convex combination each uniquely define a tree. 
That is, we show that for $T_a, T_b \in \mathcal{T}_k$,
\begin{enumerate}
\item $m(T_a)=m(T_b) \Leftrightarrow T_a \cong T_b$,
\item $M(T_a)=M(T_b)\Leftrightarrow T_a = T_b$, and 
\item for $\lambda \in (0,1), v_\lambda(T_a)=v_\lambda(T_a) \Leftrightarrow T_a = T_b$.
\end{enumerate}
For ease of notation we restrict our attention here to binary trees; it is straightforward to extend these arguments to trees that are not binary. 

\paragraph{1.} We show that $m(T)$ characterizes a tree topology. 
Suppose that for $T_a, T_b \in \mathcal{T}_k$ we have $d_0(T_a,T_b)=0$, so $m_{i,j}(a)=m_{i,j}(b)$ for all pairs $i,j \in {1,\dots,k}$. 
Consider the tip partition created by the root of $T_a$.
That is, if the root and its two descendant edges were removed, then $T_a$ would be split into two subtrees, whose tip sets we label $L$ and $R$.
For all leaf pairs $(i,j)$ with $i \in L$ and $j \in R$ we have $m_{i,j}(a) = 0$, and therefore $m_{i,j}(b)=0$.
Thus the root of $T_b$ also admits the partition $\{L,R\}$.

Similarly, any internal node $n$ in $T_a$ partitions its descendant tips into non-empty sets $L_n$, $R_n$.
Let the number of edges on the path from the root to $n$ be $x_n$.
For all leaf pairs $(i,j)$ with $i \in L_n$, $j \in R_n$ we have $m_{i,j}(a) = x_n =m_{i,j}(b)$, and so there must also be an internal node in $T_b$ which partitions the leaves into the sets $L_n, R_n$.
Since this is true for all internal nodes, and hence all internal edges, we have $T_a \cong T_b$, and $d_0$ is a metric on tree topologies.
Note that the final $k$ fixed entries of $m(T)$ are redundant for unique characterization of the topology of the tree, but are included to allow the convex combination of the topological and branch-length focused vectors.

\paragraph{2.} We show that $M(T)$ characterizes a tree using a similar argument to that of part (1). 
Suppose that for $T_a, T_b \in \mathcal{T}_k$ we have $d_1(T_a,T_b)=0$, so $M_{i,j}(a)=M_{i,j}(b)$ for all pairs $i,j \in {1,\dots,k}$. 
Let the \emph{length} of the path from the root to internal node $n$ be $X_n$. 
Then for all $i \in L_n$, $j \in R_n$ we have $M_{i,j}(T_a)=X_n=M_{i,j}(T_b)$, which means that $T_b$ also contains an internal node at distance $X_n$ from the root which admits the partition $\{L_n,R_n\}$.
Since this holds for all internal nodes including the root (where $X_n=0$), we have that $T_a$ and $T_b$ have the same topology and \emph{internal} branch lengths.

The final $k$ elements of $M(T)$ correspond to the pendant branch lengths.
When $M(T_a)=M(T_b)$ we have that for each $i \in {1,\dots,k}$ the pendant branch length to tip $i$ has length $p_i$ in both $T_a$ and $T_b$. 
Thus $T_a$ and $T_b$ have the same topology and branch lengths, hence $T_a=T_b$ and $d_1$ is a metric. 

\begin{figure}[htb]
\subfloat{
\includegraphics[width=0.45\textwidth]{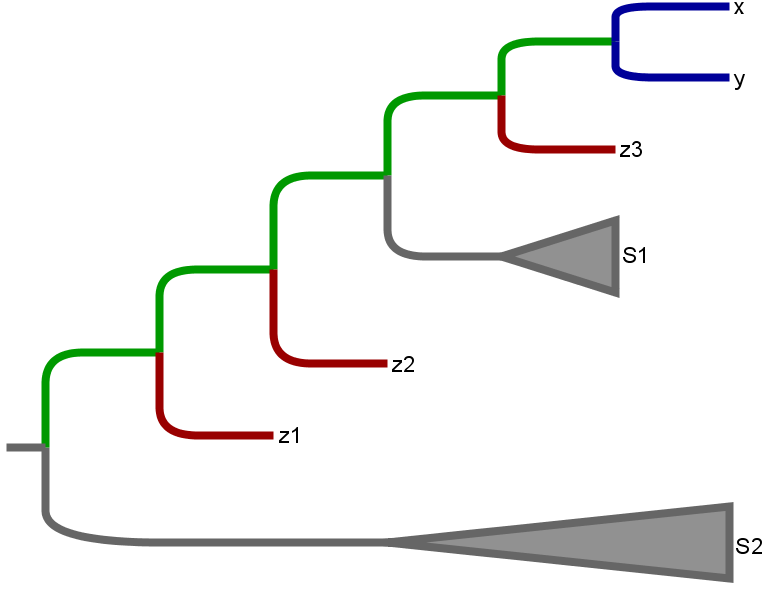}}
\subfloat{
\includegraphics[width=0.45\textwidth]{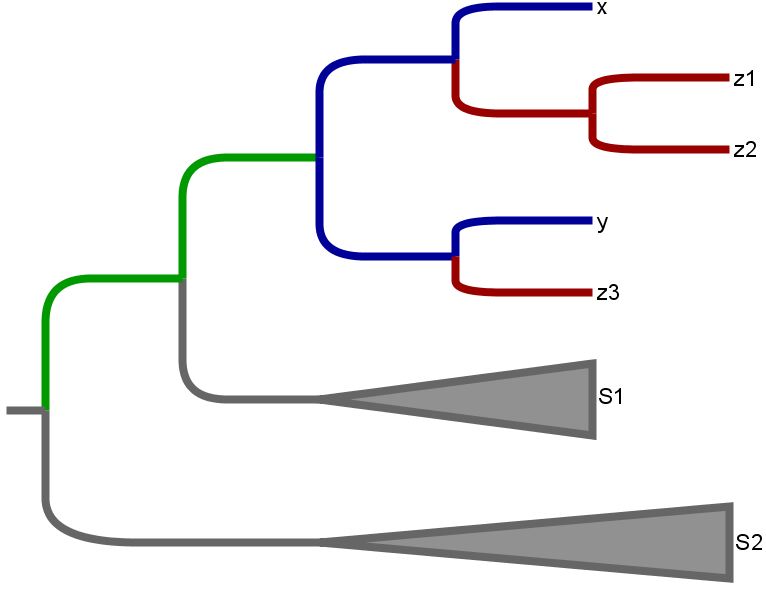}}
\caption{If $d(T_a,T_b)=0$ then $T_a$ and $T_b$ must share the same root partition, hence $S_2$ is the same set of tips in both trees. 
If $m_{x,y}(T_a)\neq m_{x,y}(T_b), m_{x,y}(T_a) - m_{x,y}(T_b) = n$ (here $m_{x,y}(T_a) - m_{x,y}(T_b)=5-2=3$), then there exist at least 3 tips $z_1,z_2,z_3$ between the root and the MRCA of $x$ and $y$ in $T_a$, but positioned further from the root than the MRCA of $x$ and $y$ in $T_b$.}
\label{fig:proof_diagram}
\end{figure}

\paragraph{3.} Finally, we need to show that $v_\lambda(T)$ characterizes a tree for $\lambda \in (0,1)$.
Suppose that for $T_a, T_b \in \mathcal{T}_k$ and $\lambda \in (0,1)$ we have $d_\lambda(T_a,T_b)=0$, so $v_\lambda(T_a)=v_\lambda(T_b)$.

Each vector has length $\binom{k}{2}+k=\frac{k(k+1)}{2}$. 
It is clear that for the final $k$ entries, that is for $\frac{k(k-1)}{2} < i \leq \frac{k(k+1)}{2}$ we have
$$
0 = (1-\lambda)(1-1) + \lambda(M_i(T_a)-M_i(T_b))
$$
which implies that $M_i(T_a)=M_i(T_b)$.

We therefore restrict our attention to the first $\binom{k}{2}$ elements of $v_\lambda$.
Now $d_\lambda(T_a,T_b)=0$ implies that  
\begin{equation}
0=(1-\lambda)(m_{i,j}(T_a)-m_{i,j}(T_b)) + \lambda(M_{i,j}(T_a)-M_{i,j}(T_b)) 
\label{eqn:condition}
\end{equation}
for all $i,j \in {1,\dots,k}$.
We show that, for any $\lambda \in (0,1)$, although it is possible for Equation~\ref{eqn:condition} to hold for \emph{some} $i,j \in {1,\dots,k}$ it will only hold for all $i,j \in {1,\dots,k}$ when $T_a=T_b$.

Suppose for a contradiction that we have $T_a \neq T_b$ but $d_\lambda(T_a,T_b)=0.$
First, observe that if $m_{i,j}(T_a)=0$ then $M_{i,j}(T_a)=0$, which forces $m_{i,j}(T_b)=M_{i,j}(T_b)=0$, and so $d_\lambda(T_a,T_b)=0$ implies that $T_a$ and $T_b$ must share the same root partition.
Now fix $\lambda \in (0,1)$ and consider a pair of tips $x,y \in {1,\dots,k}$ with $m_{x,y}(T_a) \neq m_{x,y}(T_b)$, $m_{x,y}(T_a), m_{x,y}(T_b)\neq 0$, which must exist since $T_a \neq T_b$, using part (1).
Without loss of generality, suppose that $m_{x,y}(a)-m_{x,y}(b)=n$, where $n \in \mathbb{N}$.
Then there exist at least $n$ tips $z_1,\dots,z_n$ for which, because the trees have the same root partition, we have 
$$
m_{x,z_i}(T_a) = m_{y,z_i}(T_a) < m_{x,y}(T_a)
$$ 
and 
$$
m_{x,z_i}(T_b) \geq m_{x,y}(T_b), \quad m_{y,z_i}(T_b) \geq m_{x,y}(T_b) \enspace ,
$$
for each $i \in {1,\dots,n}$ (see Figure~\ref{fig:proof_diagram}). Pick $z_j$ so that $m_{x,z_j}(T_a)=\min_{i \in [n]}{m_{x,z_i}(T_a)}$.
Then $m_{x,z_j}(T_a)-m_{x,z_j}(T_b)\leq m_{x,y}(T_a)-n-m_{x,y}(T_b) = n - n = 0$.
Now since Equation~\ref{eqn:condition} holds for all $i,j \in {1,\dots,k}$, we have
\begin{eqnarray*}
0 \geq m_{x,z_j}(T_a)-m_{x,z_j}(T_b) &=& \left( \frac{\lambda}{1-\lambda} \right) (M_{x,z_j}(T_b) - M_{x,z_j}(T_a)) \\
&\geq& \left( \frac{\lambda}{1-\lambda} \right) (M_{x,y}(T_b) - M_{x,z_j}(T_a)) \\
&=& \left( \frac{\lambda}{1-\lambda} \right) (M_{x,y}(T_b) - M_{x,y}(T_a) + M_{x,y}(T_a) - M_{x,z_j}(T_a))
\end{eqnarray*}
But $M_{x,y}(T_b)-M_{x,y}(T_a)=\left(\frac{1-\lambda}{\lambda}\right)n >0$ and $M_{x,y}(T_a)-M_{x,z_j}(T_a) > 0$ so we have a contradiction.
Thus Equation~\ref{eqn:condition} cannot hold for all $i,j \in {1,\dots,k}$, so $d_\lambda(T_a,T_b)=0 \Rightarrow T_a=T_b$.
\end{proof}

Our metric is fundamentally for \emph{rooted} trees. A single unrooted tree, when rooted in two different places, produces two distinct rooted trees, and our distance between these will be positive. It will be large if the two distinct places chosen for the roots are separated by a long path in the original unrooted tree. However, it would be straightforward to check if two trees have the same (unrooted) topology in our metric: root both trees on the edge to the same tip and find the distance. Re-rooting a tree will induce systematic changes in $v(T)$, with some entries increasing and others decreasing by the same amount.  
The metric $d_\lambda$ is invariant under permutation of labels. 
That is, for trees $T_a$ and $T_b$ and a label permutation $\sigma$, $d_\lambda (T_a,T_b)=d_\lambda (T_a^\sigma,T_b^\sigma)$.

We note that alternative, similar definitions for a metric on $\mathcal{T}_k$ are possible.
In particular, the metric defined by 
$$ 
D_{\lambda}(T_a, T_b) =(1-\lambda) \| m(T_a) - m(T_b)\| + \lambda \| M(T_a) - M(T_b)\|
$$
gives similar behavior to the metric we have used. The difference between the two is that in $D$, the Euclidean distances are taken between the $m$ and $M$ vectors \emph{before} they are weighted by $\lambda$. Rather than a Euclidean distance between two vectors ($v$ for each tree), $D$ is a weighted sum of two different metrics: the distance between $m(T_a)$ and $m(T_b)$ (first term in the above), and between $M(T_a)$ and $M(T_b)$ (second term). 
A benefit of $D_\lambda$ is that it is linear in $\lambda$, so that the changes as $\lambda$ moves from $0$ to $1$ are more intuitive. 
A disadvantage is that $D_\lambda$ itself is not Euclidean, leading to (typically only slightly) poorer-quality visualization in MDS plots (Section~\ref{sec:visualisation}).

\subsection{The role of $\lambda$}

\begin{figure}[htb]
\centering
\includegraphics[width=0.95\textwidth]{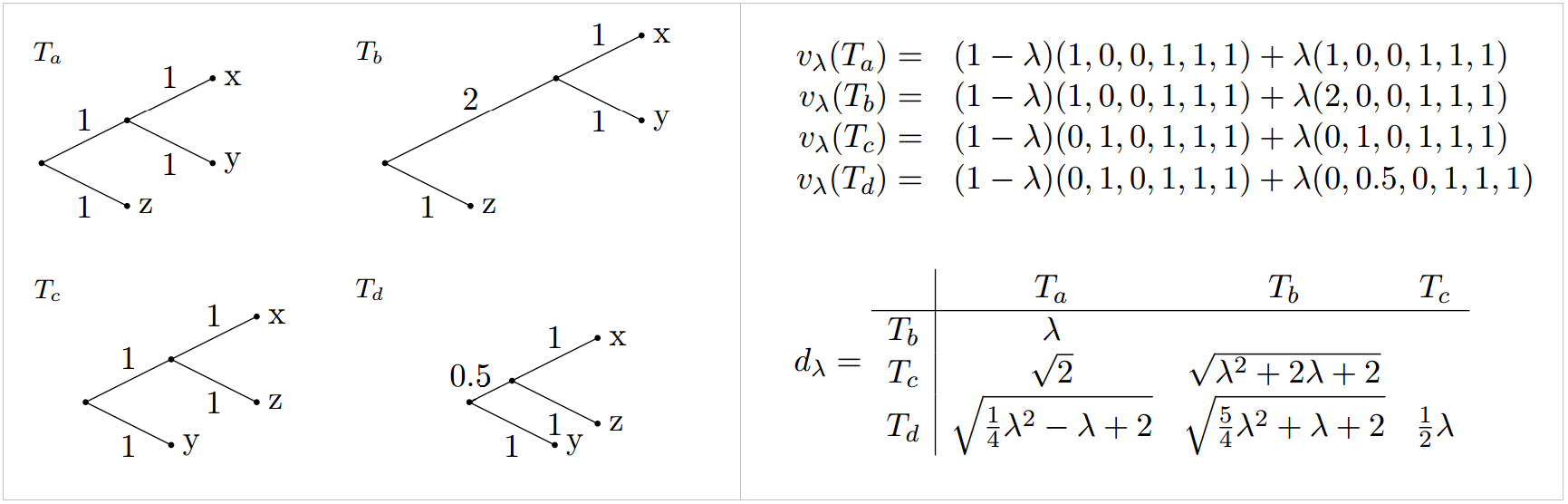}
\caption{Example trees from $\mathcal{T}_3$ to illustrate the effect of changing $\lambda$. 
The distance between $T_a$ and $T_c$ ($d_\lambda(T_a,T_c)$) is fixed for $\lambda \in [0,1]$ because their unmatched edges have the same length. $d_\lambda(T_b,T_d)<d_\lambda(T_b,T_c)$ for $\lambda \in (0,1]$ because the edge which $T_c$ and $T_d$ share and which is not found in $T_b$ is shorter in $T_d$ than in $T_c$.
Most entries increase with $\lambda$. 
The only distance to decrease as $\lambda \rightarrow 1$ is $d_\lambda(T_a,T_d)$, because the difference between the lengths of their unmatched branches is less than one.}
\label{fig:lambdaexample}
\end{figure}

The parameter $\lambda$ allows the user to choose to what extent the branch lengths of a tree, vs its topology alone, contribute to the tree distance.  
The distance between two trees may increase or decrease as $\lambda$ increases from $0$ to $1$. 
Since the topology-based vector, $m$, contains the number of edges along paths in the tree, and $M$ contains the path lengths, the branch lengths are implicitly compared to $1$ in the convex combination $v$. 
In other words, if the branch lengths are much larger than 1, then the entries of $M$ will be much larger than the corresponding entries of $m$, and $M$ will dominate in the expression for $v$ even when $\lambda$ is relatively small. 
Conversely, if the branch lengths are much less than 1, the entries of $M$ will be much less than those of $m$, and a value of $\lambda$ near 1 will be required in order for lengths to substantially change $v$.  In the case when all branch lengths are equal to 1, $m=M$ and the distance is independent of $\lambda$. 
The example in Figure~\ref{fig:lambdaexample} may provide some intuition. 

In order to capture length-sensitive distances between trees, we may wish to use a value of $\lambda$ such that neither $(1-\lambda)m$ nor $\lambda M$ dominate excessively, but naturally this will depend on the analysis.
For a more gradual change in $d_\lambda$ as $\lambda$ tends to 1, and for comparison of this change across different data sets, it is possible to rescale the branch lengths, for example by dividing all branch lengths by the median, or by changing the units.
However, this should be done with caution because information is inevitably lost through rescaling.
For example, if a phylogenetic analysis of multiple genes from the same organism had produced trees with similar topologies but different clock rates (e.g.\ branches in trees from gene 1 were typically twice as long as branches in trees from gene 2), this information would be obscured by rescaling.

\subsection{Other metrics on labeled phylogenetic trees} \label{sec:others}

Various metrics have been defined on phylogenetic trees.
For a recent comparative survey, see~\cite{Kuhner2014}.

The vector $M(T)$ is similar to the cophenetic vector of Cardona et al.~\cite{Cardona2013}, following Sokal and Rohlf~\cite{Sokal1962}, where $M_{i,j}$ is called the \emph{cophenetic value} of tips $i$ and $j$.
Parts (1) and (2) of our proof follow directly from results in~\cite{Cardona2013}.
Instead of the pendant branch lengths $p_i$, Cardona et al.\ use the depth of each taxon, which can be considered as $M_{i,i}$.
This involves a repetition of information between $M_{i,i}$, $M_{j,j}$ and $M_{i,j}$ whenever $M_{i,j}>0$.
However, their definition does allow for the presence of nested taxa (taxa which are internal nodes of the tree).
Cardona et al.\ also note that tree vectors such as these can be compared by any norm $L^p$, but that the Euclidean norm $L^2$, which we also use, has the benefits of being more discriminative than larger values of $p$, and enabling many geometrical and clustering methods.

The most widely used metric is that of Robinson-Foulds (RF)~\cite{Robinson1981}.
However, RF and its branch-length weighted version~\cite{Robinson1979} are fundamentally very different from our metric because they are defined on unrooted trees, whereas our metric emphasizes the placement of the root and all the descendant MRCAs.
Similarly, the path difference metrics of Williams and Clifford~\cite{Williams1971} and Steel and Penny~\cite{Steel1993} are for unrooted trees.
They compare the distance between each pair of tips in a tree; in essence, they consider the distance between \emph{tips} and their MRCA, whereas our metric considers the distance between the \emph{root} and the MRCA.
These metrics therefore capture different characteristics of trees and are only loosely correlated with our metric.

The metric introduced by Billera, Holmes and Vogtmann (BHV) captures branch lengths as well as tree structure~\cite{Billera2001} on rooted trees. 
The BHV tree space is formed by mathematically `gluing' together orthants. 
Each orthant corresponds to a tree topology and moving within an orthant corresponds to changing the tree's branch lengths. 
Moving from one orthant to an adjacent one corresponds to a nearest-neighbor interchange move. 
The metric is convex: for any two distinct trees $T_1$ and $T_2$, there is a tree $T_3$ `in between' them, i.e.\ such that $d^{BHV}(T_1,T_3)+ d^{BHV}(T_3,T_2) = d^{BHV}(T_1,T_2)$. 
This is a mathematically appealing and useful property, in part because it allows averaging of trees~\cite{Bacak2014}. However, it does not allow the user to choose a balance between the topology of the tree and the branch lengths. 
We provide further comparisons in Figure~\ref{fig:sixshapesupp}. 

Our metric compares trees with the same set of taxa (i.e.\ the same tips). As a consequence, it is suited for studies in which there is one set of taxa, and trees can be compared from different genes, inference methods, and sources of data.
Our metric does not capture distances between trees with different taxa; where the taxa overlap between two trees, our approach can compare the subtrees restricted to the taxa present in both trees. In contrast, comparisons between \emph{unlabeled} trees take a different form (e.g.\ kernel methods~\cite{Poon2013}), suitable to comparing trees on different sets of taxa.

Many phylogenetic analyses are, implicitly or explicitly, conducted in the context of a rooted tree. In the context of macroevolution, examples include estimates of times to divergence, ancestral relationships and ancestral character reconstruction. In more recent literature, most methods to link pathogen phylogenies to epidemic dynamics (phylodynamics) \cite{Stadler2011,Rasmussen2014,Didelot2014} are based on rooted phylogenetic trees.  For these reasons, the fact that the relationships to the root of the tree play a central role in our metric allows it to capture intuitive similarities in groups of trees in a way that other metrics do not.

\section{Exploring tree space} \label{sec:exploring}

Tree spaces are large and complex. 
It is important to understand the `shape' of a tree space before attempting to summarize it.
Our metric creates a space which can be effectively visualized (Section~\ref{sec:visualisation}) and where \emph{islands} (distinct clusters) of tree topologies can be detected. We demonstrate these techniques on a sample dataset of BEAST posterior trees for Dengue fever.
Finally, in Section~\ref{sec:summary} we describe how our metric can be used to make a principled selection of summary trees.

\begin{figure}[htb]
\centering
\subfloat[$\lambda=0$]{
\includegraphics[height=7.25cm]{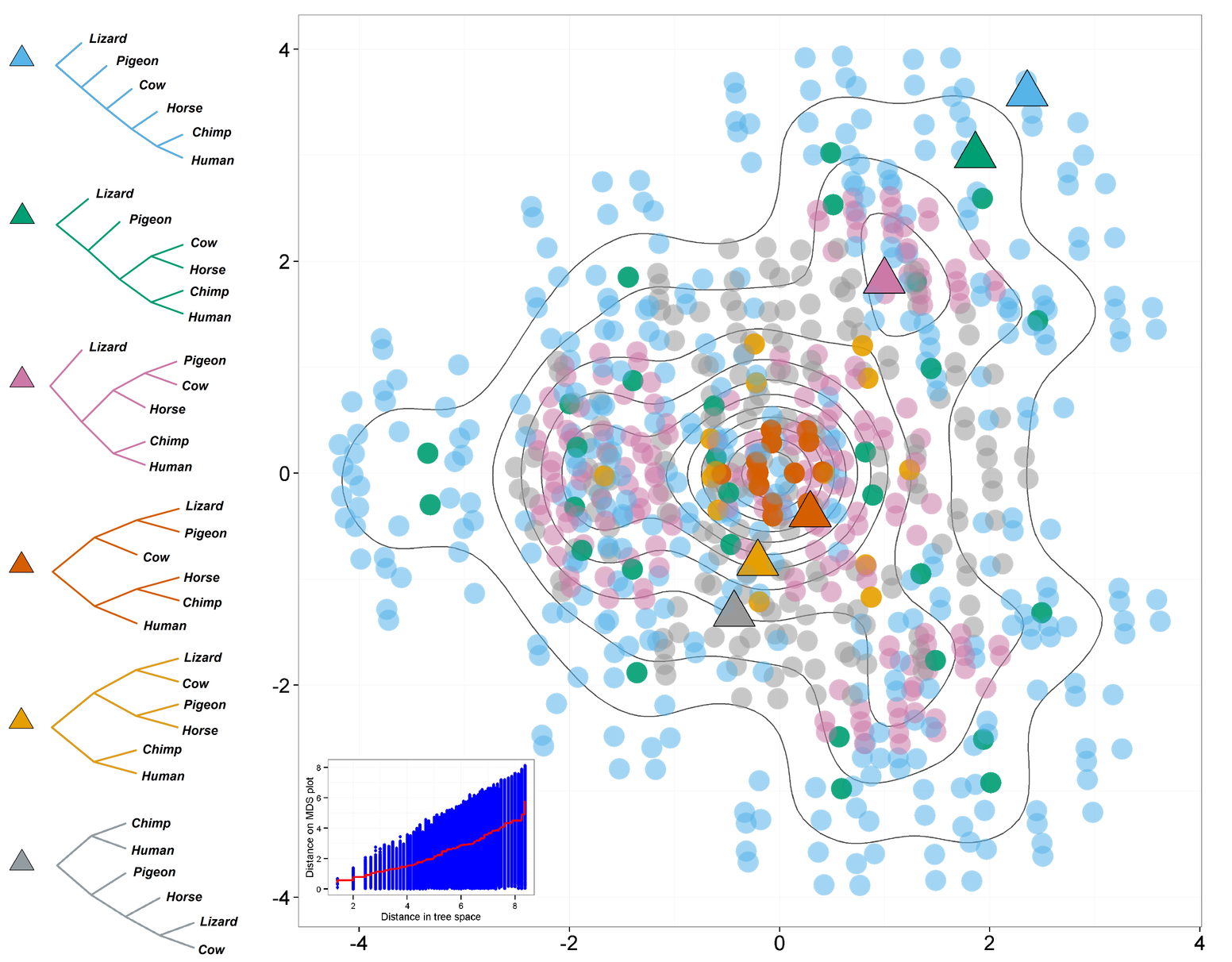}
\label{fig:sixshapesuppOurs}} \\
\subfloat[RF]{
\includegraphics[width=0.35\textwidth]{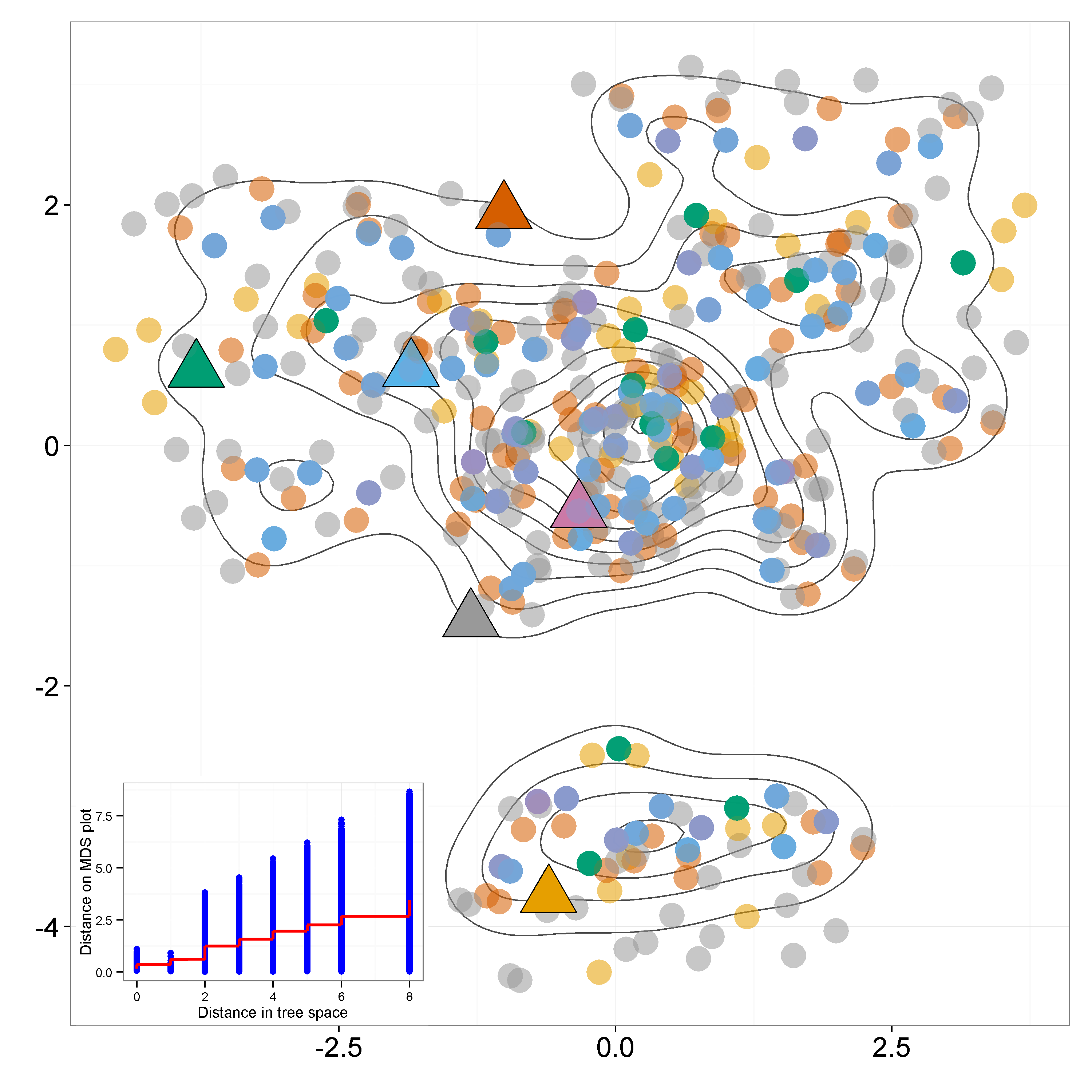}
\label{fig:sixshapesuppRF}}
\hspace{20pt}
\subfloat[BHV]{
\includegraphics[width=0.35\textwidth]{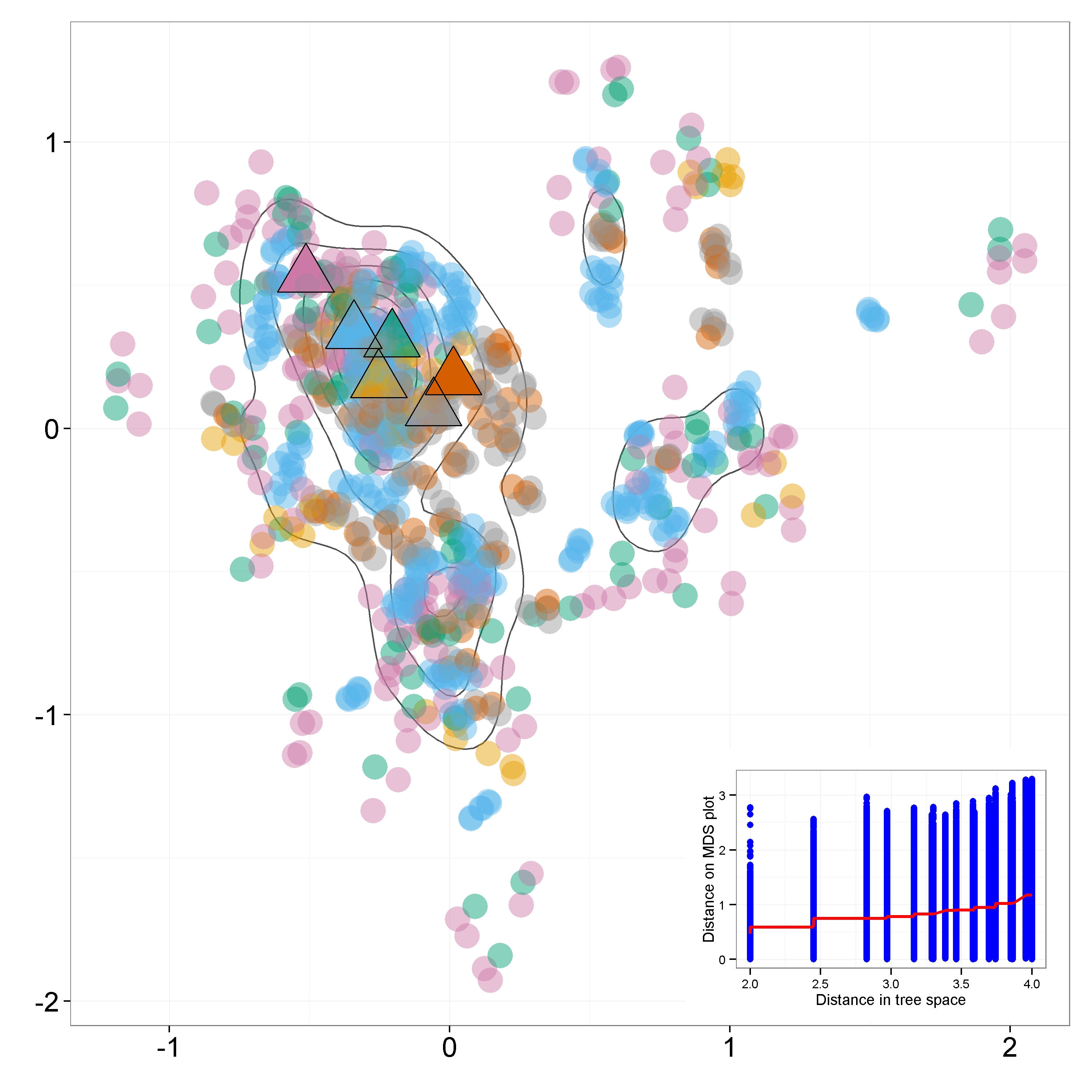}
\label{fig:sixshapesuppBHV}}
\caption{MDS projections of the shape of $\mathcal{T}_6$ according to metrics as shown, with corresponding Shepard plots. Colors correspond to tree shapes, of which examples are shown with triangles.
Symmetries correspond to permutations of the labels. In order to include the BHV metric in this comparison we assigned all branch lengths to be 1, with the result that $m=M$ and our metric is invariant to $\lambda \in [0,1]$.}
\label{fig:sixshapesupp}
\end{figure}

\subsection{Visualizing tree space} \label{sec:visualisation}

Visualization techniques like multidimensional scaling (MDS)~\cite{Cox2000} have been used to explore tree space previously, but are challenged by poor-quality projections~\cite{Hillis2005,Berglund2011}. 
When a set of distances is projected into a low-dimensional picture, there is typically some loss of information, which may result in a poor-quality visualization. For example, if 10 points are all 3 units away from each other, this will not project well into two dimensions; some will appear more closely grouped than others. 
However, if there are only 3 such points they can be arranged on a triangle, capturing the distances in two dimensions.

One approach to checking the quality of a visualization is a Shepard plot~\cite{Shepard1972}, which is a scatter plot of the true distance vs the MDS distance (i.e.\ the distance in the projection). 
Figure~\ref{fig:sixshapesupp} shows the MDS plot of the space of trees on 6 tips (with unit branch lengths) under our metric and two others: RF~\cite{Robinson1981} and  BHV~\cite{Billera2001}.
Shepard plots are included as an indication of the quality of each projection.

Each metric captures differences in both shape (shown by color) and labeling.  
Our approach produces a wide range of tree distances and captures intuitive similarities (e.g.\ the similar chimp-human pairing in the yellow and gray triangles in Figure~\ref{fig:sixshapesuppOurs}).
All 945 possible tree shapes and permutations of their labels are present in the input set of trees, and consequently there is no asymmetry that should lead to one group being separated from the rest.  Our metric captures the symmetry in the space and illustrates this in the MDS projection (Figure \ref{fig:sixshapesuppOurs}), whereas in RF and BHV (Figures~\ref{fig:sixshapesuppRF} and~\ref{fig:sixshapesuppBHV}), poor-quality projections lead to apparent distinct tree islands where none exist.   
This makes detecting genuine islands in posterior sets of trees difficult using RF or BHV. 
The Euclidean nature of our metric means that it is well-suited to visualizations that project distances into two- or three-dimensional Euclidean space. 
The Shepard plots illustrate that the correspondence between the projected distances and true distances is better in our metric than the others, though the projection distance can be much smaller than the true distance (but not the converse). 
MDS projections are of higher quality for trees from data than in the space of all trees on 6 tips (e.g.\ Figure~\ref{fig:dengue}).

\subsection{Islands in tree space} \label{sec:islands}

Tree inference methods explore the set of possible trees given the data, but there are many alternative trees. 
Bayesian Markov Chain Monte Carlo (MCMC) methods as implemented in BEAST~\cite{Drummond2012} and MrBayes~\cite{MrBayes2001} produce a posterior set of trees, each with associated likelihoods.  
Distinct islands of trees within small NNI distance can share a high parsimony or likelihood~\cite{Maddison1991,Salter2001}. Complicating matters further, not all taxa in a dataset will have complete data at all loci. In this case, there are `terraces' of many equally likely trees, with trees in a terrace all supporting the same subtrees for the taxa with data at a given locus~\cite{Sanderson2011}. These facts have deep implications for tree inference and analysis, but the difficulty of detecting and interpreting tree islands has meant that the majority of analyses, particularly on large datasets, remain based on a single summary tree method such as the maximum clade credibility (MCC) tree with posterior support values illustrating uncertainty, or on maximum likelihood or parsimony trees with bootstrap supports. 
Our metric can detect distinct clusters or islands of close tree topologies ($\lambda=0$) within a collection of trees. Since distance is defined by the metric that is used, these are different from previously described tree islands~\cite{Maddison1991,Salter2001}.

\begin{figure}[htb]
\centering
\subfloat[GI relaxed clock, $\lambda=0$]{
\includegraphics[width=9cm]{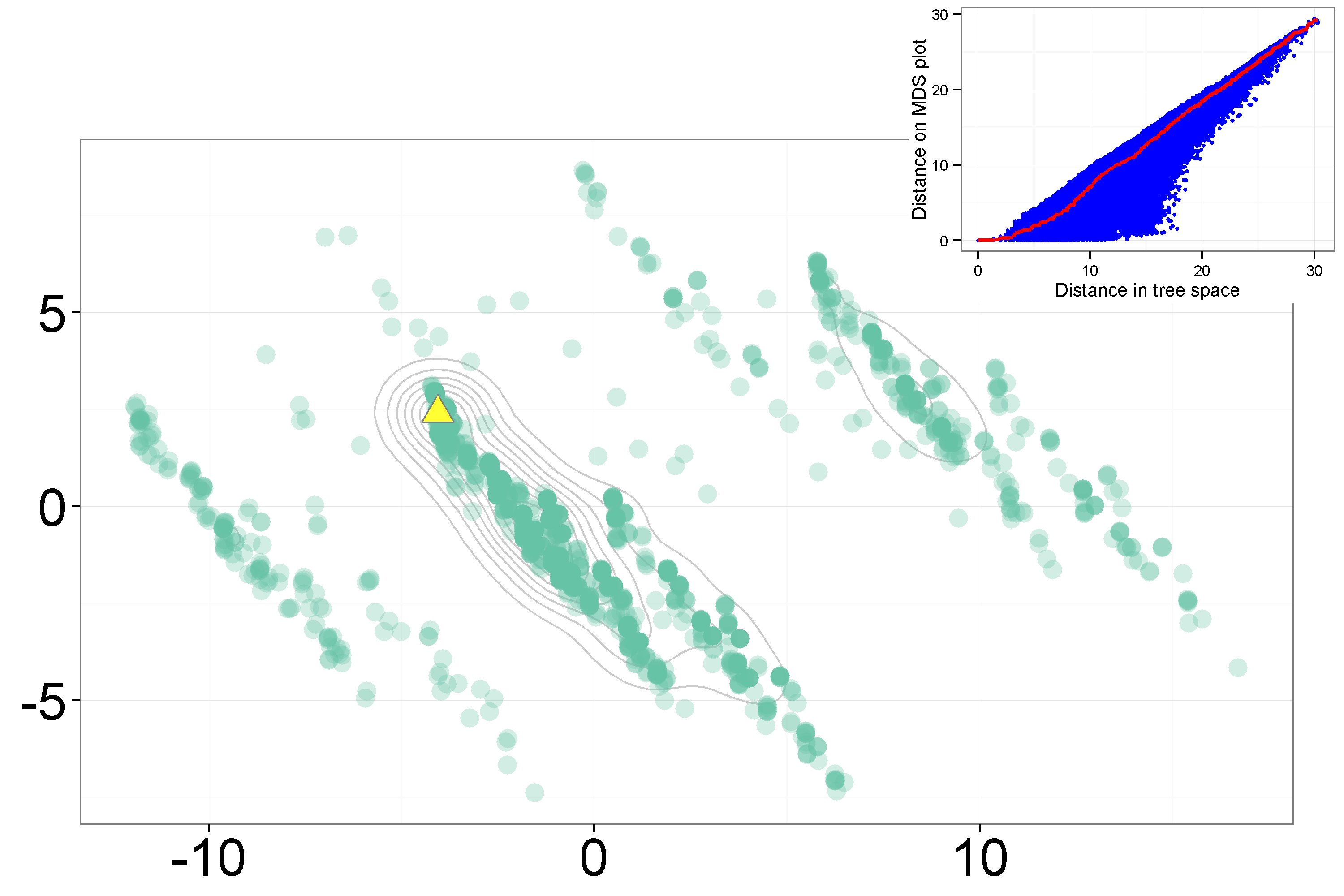}
\label{fig:dengueGI}} \\
\subfloat[CP strict clock, $\lambda=0$]{
\includegraphics[width=9cm]{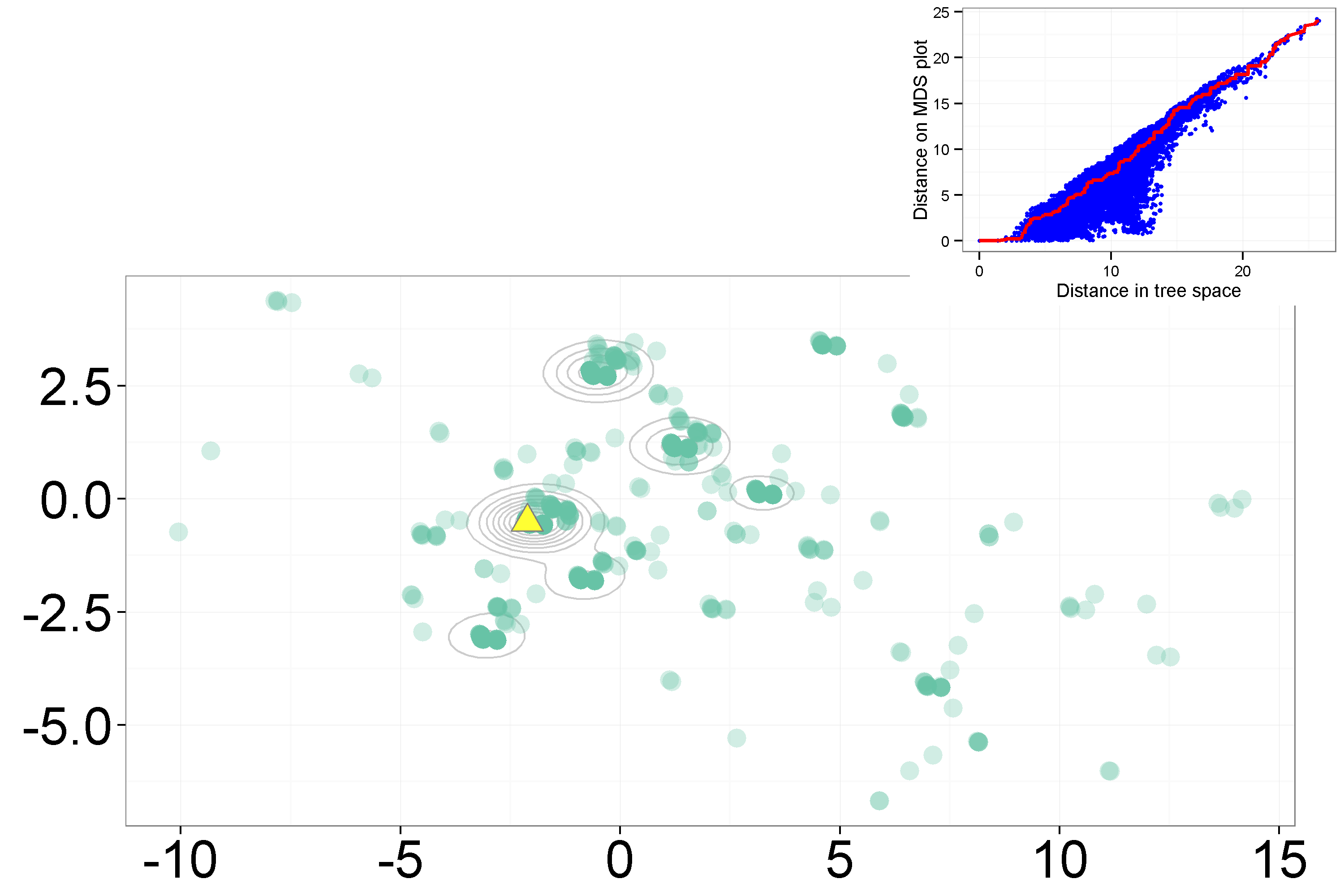}
\label{fig:dengueCP}}
\caption{MDS plots of dengue fever trees sampled from posteriors demonstrate differences in the space of trees explored by BEAST under different settings. MCC trees are marked by yellow triangles. (a) GTR + $\Gamma$ + I substitution model with uncorrelated lognormal-distributed relaxed molecular clock (b) Codon-position specific substitution model GTR~+~CP, with a strict clock.
}
\label{fig:dengue}
\end{figure}

We demonstrate our approach using the examples from
the original paper introducing BEAST~\cite{Drummond2007}, where Drummond and Rambaut demonstrated their Bayesian analysis on 17 dengue virus serotype 4 sequences from~\cite{Lanciotti1997} under varying priors for model and clock rate.
As a means of comparing posterior tree distributions under different BEAST settings, we ran the \texttt{xml} files provided in~\cite{Drummond2007} through BEAST v1.8 and analyzed the resulting trees.
In Figure \ref{fig:dengue} we demonstrate MDS plots of two of these analyses: Figure \ref{fig:dengueGI} is a sample of the posterior under the standard GTR + $\Gamma$ + I substitution model with uncorrelated lognormal-distributed relaxed molecular clock; Figure \ref{fig:dengueCP} is a sample from the posterior under the codon-position specific substitution model GTR + CP, with a strict clock.
These analyses demonstrate some of the different signals which can be detected by visualizing the metric's tree distances: distinct islands are visible in (a), whereas in (b) there are some tight bunches of points but the posterior is not as clearly separated into distinct islands.
Additionally,  trees in (b) are more tightly grouped together, indicating that is less conflict in the phylogenetic signals in (b).
We ran BEAST twice with the settings from (a) (using different random starting seeds), and found that the space of trees explored and accepted in each run was  similar, with the same islands. It is also encouraging that the MCC tree from the first BEAST run had the same topology as that from the second run, and that this topology sits in the largest island (yellow triangle in Figure \ref{fig:dengueGI}).
Similarly, the MCC tree is in the largest cluster in (b).

Islands are of concern for tree inference and for outcomes that require the topology of tree, which will affect ancestral character reconstruction and consequently the interpretation of many phylogenetic datasets~\cite{Sullivan1996}.
However, other analyses, and tree estimation methods themselves, take trees' branch lengths as well as topology into account. We find that islands typically merge together in the metric as $\lambda$ approaches $1$; the posterior becomes unimodal.

\subsection{Summary trees} \label{sec:summary}

Summarizing groups of phylogenetic trees is challenging, particularly when there are different alternative and inconsistent topologies~\cite{Heled2013}. MCC trees can summarize posterior distributions; they rely on including the clades with the strongest posterior support but where these are not concordant the resulting MCC trees can have negative branch lengths. Furthermore, the MCC tree itself may never have been sampled by the MCMC chain, casting doubt on its ability to reflect the relationships in the data. 

Our metric allows us to find `central' trees within any group of trees: a posterior set of trees, or any island or cluster of trees.
To do this, we exploit the fact that our metric is simply the Euclidean distance between the two vectors $v_\lambda(T_a)$ and $v_\lambda(T_b)$.
Among $N$ trees $T_i$ $(i=1,\dots,N)$ in a posterior sample, we can find the tree closest to the average vector $\bar{v} = \frac{1}{N}\sum_{i=1}^N v_\lambda(T_i)$. The average vector $\bar{v}$ may not in itself represent a tree, but we can then find the tree vectors from our sample which are closest to this average.
These vectors correspond to trees, $T_c$, (not necessarily unique) which minimize the distance between $\bar{v}$ and $v_\lambda(T_c)$. This minimal distance is a measure of the quality of the summary: if it is small,  $T_c$ is close to `average' in the posterior. 
$T_c$ is known as the geometric median tree~\cite{Haldane1948}. 
The geometric median is one of a range of barycentric methods which can be used with our metric to select a tree as a representative of a group. It is also straightforward to weight trees by likelihood or other characteristics when finding the geometric median. This provides a suite of tools for summarizing collections of trees.  Geometric median trees will always have been sampled by the MCMC, and will not have negative branch lengths. We found that within islands, geometric median trees are very close to the MCC tree for the island.

\section{Concluding remarks} \label{sec:conclusion}

The fact that our metric is a Euclidean distance between two vectors whose components have an intuitive description means that simple extensions are straightforward to imagine and to compute. For example, it may be the case that the placement of a particular tip is a key question. This could occur, for example, in a real-time analysis of an outbreak, where new cases need to be placed on an existing phylogeny to determine the likely source of infection. We could form a metric that emphasizes differences in the placement of a particular tip (say, $A$), by weighting $A$'s entries of $m$ and $M$ highly compared to all other entries. In this new metric, trees would appear similar if their placement of $A$ was similar; patterns of ancestry among the other tips would contribute less to the distance. 
Indeed, it is possible to design numerous metrics, extending this one and others, and using linear combinations of existing metrics~\cite{Liebscher2015}. 

Our metric enables quantitative comparison of trees.
It is relevant to viral, bacterial and higher organisms and can help to reveal distinct, likely patterns of evolution. 
It allows quantitative comparison of tree estimation methods and can provide a heuristic for convergence of tree estimates.  
There are also many applications in comparing trees derived from different data. For example, the metric can be used to detect informative sites which, when removed from sequence alignments, change the phylogeny substantially.
More generally, our metric can find distances between any rooted, labeled trees with the same set of tips.
It can be used to compare tree structures 
from a variety of scientific disciplines,
including decision trees, network spanning trees, hierarchical clustering trees and language trees. 


\end{document}